\documentclass[letterpaper, 10 pt, conference]{ieeeconf}
\IEEEoverridecommandlockouts
\usepackage{cite}
\usepackage{amsmath,amssymb,amsfonts}
\usepackage{algorithmic}
\usepackage{graphicx}
\usepackage{textcomp}
\usepackage{xcolor}
\usepackage[dvipsnames]{xcolor}
\usepackage{booktabs}
\usepackage[capitalise,noabbrev]{cleveref}
\usepackage{url}
\usepackage{caption}
\usepackage{macros}
\usepackage{comment}

\usepackage{mathptmx}    % fallback Times font
\usepackage{bm}
\usepackage{stfloats} 

\def\BibTeX{{\rm B\kern-.05em{\sc i\kern-.025em b}\kern-.08em
    T\kern-.1667em\lower.7ex\hbox{E}\kern-.125emX}}

\title{\LARGE \bf
An End-to-End Encrypted Control Pipeline for\\Multi-Agent Coordination via CKKS Homomorphic Encryption
}

\author{Sai Sandeep Damera, Maria Charitidou, Asim Zoulkarni and John S. Baras% <-this % stops a space
\thanks{The authors are with the University of Maryland, College Park, USA. Emails: \{\texttt{sdamera}, \texttt{mchar}, \texttt{asimz}, \texttt{baras}\}\texttt{@umd.edu}.}% <-this % stops a space
}

\begin{document}

\maketitle
\thispagestyle{empty}
\pagestyle{empty}

%%%%%%%%%%%%%%%%%%%%%%%%%%%%%%%%%%%%%%%%%%%%%%%%%%%%%%%%%%%%%%%%%%%%%%%%%%%%%%%%
\begin{abstract}

Cloud-based coordination of multi-agent systems requires sharing state
with a central server, creating a conflict between coordination and
privacy. Fully homomorphic encryption (FHE) resolves this in principle,
but its severe arithmetic constraints 
demand that every stage of the control
loop be redesigned from first principles. We present an end-to-end
encrypted control pipeline in
which sensing, state estimation, state propagation, and consensus
control all operate on CKKS-encrypted data using only addition,
multiplication, and cyclic rotation. In order to overcome the computational challenges of FHE, we employ steady-state Kalman gains instead of solving for the matrices online and graph Laplacians
are applied via the diagonal method at a cost proportional to the
number of nonzero cyclic diagonals, accommodating ring, torus, and
complete-graph topologies within a unified framework.
To quantify the cumulative effect of encryption noise, we use the
separation principle to decouple controller and observer error dynamics
and derive a periodic bootstrapping bound in which CKKS bootstrapping
acts as an impulsive disturbance; the resulting steady-state error ball
depends on the bootstrapping precision and the closed-loop spectral
radius, providing a direct design equation for the privacy-accuracy
tradeoff. The pipeline is validated on a multi-agent
formation control scenario, confirming stable closed-loop operation
under encryption with bounded tracking error.

\end{abstract}

%%%%%%%%%%%%%%%%%%%%%%%%%%%%%%%%%%%%%%%%%%%%%%%%%%%%%%%%%%%%%%%%%%%%%%%%%%%%%%%%

% \begin{figure*}[!b]
% \centering
% \includegraphics[width=0.95\textwidth]{figures/fhe_ctrl_pipeline.pdf}
% \caption{Overview of the end-to-end encrypted control pipeline. Each
%   agent encrypts its local measurements and transmits the resulting
%   ciphertext to an untrusted cloud coordinator. The coordinator executes
%   the full control loop (state estimation, state propagation, consensus
%   control) entirely on CKKS-encrypted data using only three primitives:
%   element-wise addition, element-wise multiplication, and cyclic rotation.
%   Encrypted control commands are returned to the agents, who decrypt
%   locally and actuate. The coordinator never observes plaintext states,
%   measurements, or control inputs at any stage.}
% \label{fhe:fig:pipeline_overview}
% \end{figure*}

% !TEX root = ../main.tex
\section{Introduction}
\label{fhe:sec:intro}
% \maria{What about keeping only the first sentence of Fig. 1 and discard the rest? You are explaining them potentially referring  to Fig. 1 in the text anyways.}
Cloud-based coordination of multi-agent systems offers compelling computational advantages: a central server can fuse sensor data from all agents, propagate a joint dynamical model, and compute globally optimal control inputs.
Yet this architecture demands that every agent transmit its state to the coordinator in the clear.
In settings where the agents belong to competing organizations, operate under privacy regulations, or traverse adversarial communication channels as for example in coalition military operations and commercial fleet coordination, exposing raw positions, velocities, and sensor readings is unacceptable.
%Coalition military operations and commercial fleet coordination among rival carriers are examples of cases that share this fundamental tension between the need to coordinate and the need to conceal.

Fully homomorphic encryption (FHE) resolves this tension in principle.
Under FHE, arbitrary computations can be carried out on ciphertext  without ever revealing the underlying plaintext, providing information-theoretic security without trusting the coordinator.
The CKKS scheme~\cite{cheon2017homomorphic} is particularly suited to control applications because it operates natively on vectors of approximate real numbers, supporting element-wise addition, element-wise multiplication, and cyclic rotation as its three primitive operations.
Recent work has shown that these three primitives suffice for meaningful numerical simulation: Kholod et al.~\cite{schlottkelakemper2024securecompute} solve the linear advection equation on CKKS-encrypted grids, demonstrating that finite-difference stencils map directly onto the rotation-and-masking pattern of CKKS vector arithmetic.

Despite this progress, existing work treats individual building blocks of the control loop in isolation.
Encrypted controllers have been proposed for linear systems~\cite{kogiso2015cyber}, encrypted state estimation has been studied under partially homomorphic schemes~\cite{farokhi2016secure}, and encrypted optimization has been explored for distributed settings~\cite{binfet2023towards}. Nevertheless, a fully encrypted framework encompassing the complete control cycle from sensing to actuation is still not available.
%What is missing is an end-to-end analysis: a single encrypted pipeline that covers sensing, state estimation, state propagation, control computation, and actuation, together with a unified error bound that accounts for every noise source in the chain.
%Without such an analysis, a practitioner cannot answer the basic design question: for a given CKKS parameter set and a given closed-loop system, how large is the steady-state tracking error introduced by encryption?
\begin{figure}[!t]
\centering
\includegraphics[width=\columnwidth]{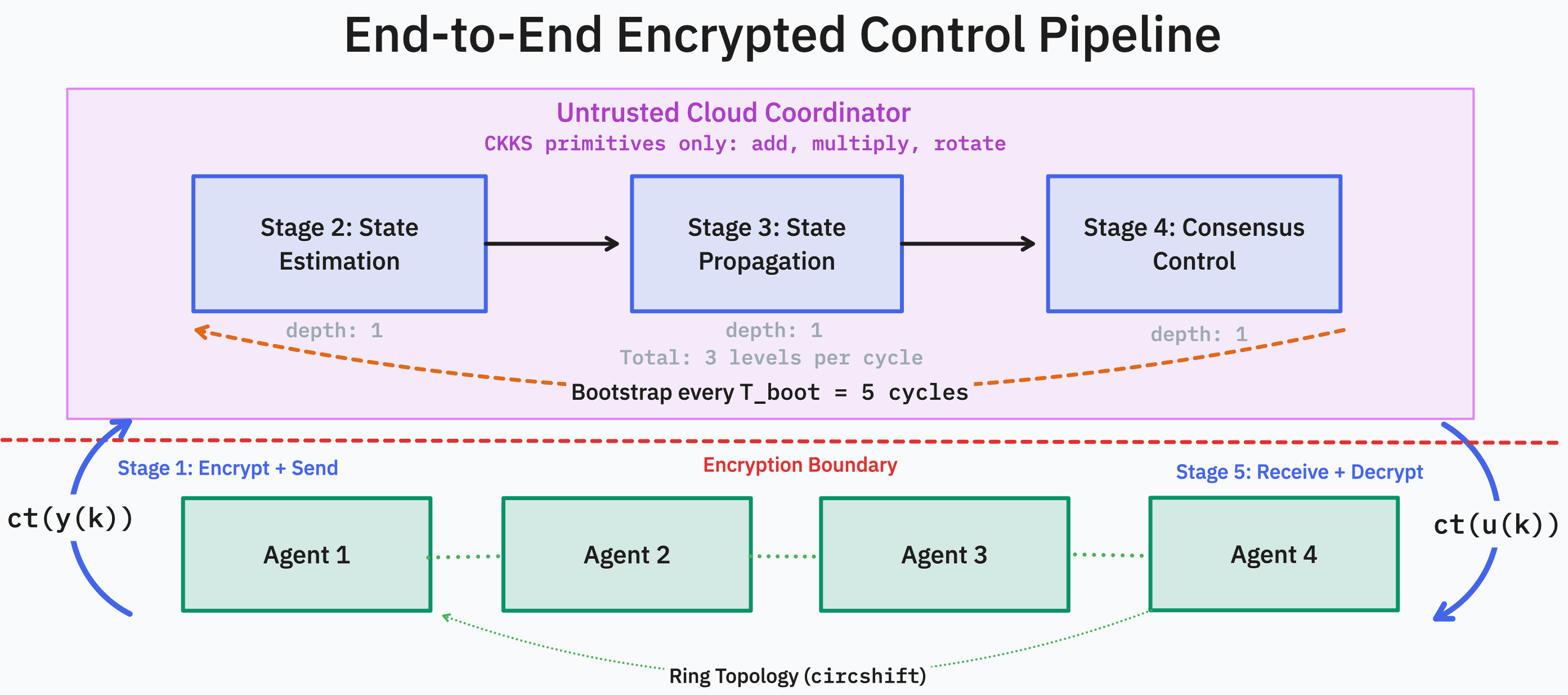}
\caption{Overview of the encrypted control pipeline. 
% \asim{unreferenced?}
% Each agent encrypts its local measurements and transmits the resulting ciphertext to an untrusted cloud coordinator. The coordinator executes the full control loop (state estimation, state propagation, consensus control) entirely on CKKS-encrypted data using only three primitives: element-wise addition, element-wise multiplication, and cyclic rotation. Encrypted control commands are returned to the agents, who decrypt locally and actuate. The coordinator never observes plaintext states, measurements, or control inputs at any stage.
}
\label{fhe:fig:pipeline_overview}
\vspace{-0.5cm}
\end{figure}
\text{Addressing this problem}, this paper presents a complete encrypted control pipeline for  multi-agent formation control that allows the designer to quantify the steady-state tracking error introduced by encryption. More specifically, we construct an end-to-end FHE-compatible control chain for linear time-invariant (LTI) multi-agent systems where every part is realized using only the three native CKKS primitives. 
% \asim{co-design contribution: Our FHE layer is a design constraint that determines the admissible online computation (i.e., additions, multiplications, and cyclic rotations): online encrypted computation is restricted to linear-algebra kernels and rotation-based graph operations.} 
To overcome the computational challenges related to matrix inversion we use pre-computed Kalman gains, an encrypted state propagator via precomputed matrix exponential,
    and an encrypted consensus controller
    whose cost is determined by the number of nonzero
    cyclic diagonals in the graph Laplacian.  For ring topologies this
    yields two rotations at zero multiplicative depth; for torus and
    other sparse topologies the same diagonal-method primitive applies
    at modest additional cost. A formal stability analysis of the closed-loop encrypted system is performed and a steady state error bound is derived that depends on the bootstrapping precision and the closed-loop spectral radius. Finally, the framework is validated on a multi-agent scenario with various graph topologies via OpenFHE.jl~\cite{schlottkelakemper2024securecompute} revealing the importance of the graph topology in the design of a computationally efficient encrypted control scheme.
    The framework is therefore best suited to supervisory coordination with updates on the order of seconds, although ongoing FHE acceleration efforts suggest that faster encrypted control may become practical in future systems~\cite{darpa_dprive}.

The remainder of the paper is organized as follows.
\cref{fhe:sec:related} surveys prior work on encrypted control and positions our contribution.
\cref{fhe:sec:setup} reviews the CKKS scheme, multi-agent LTI consensus, and ISS cascade theory.
\cref{fhe:sec:pipeline} presents the five-stage encrypted control pipeline.
\cref{fhe:sec:error} derives the end-to-end error bound.
\cref{fhe:sec:experiments} reports numerical results, and \cref{fhe:sec:conclusion} concludes.

% !TEX root = ../main.tex
\section{Related Work}
\label{fhe:sec:related}

We organize the literature into three threads: encrypted control systems, encrypted state estimation, and FHE-based numerical computation.
Our contribution draws on all three but is, to our knowledge, the first to unify them into a single pipeline with an end-to-end error guarantee. 

\paragraph{Encrypted control}
The idea of executing control algorithms on encrypted data originates with Kogiso and Fujita~\cite{kogiso2015cyber}, who demonstrated encrypted state feedback for a single-input single-output plant using the ElGamal cryptosystem.
Subsequent work generalized this to multi-input systems and explored the tradeoff between encryption overhead and control performance.
Kim et al.~\cite{kim2016encrypting} provided a systematic comparison of partially homomorphic (Paillier, ElGamal) and fully homomorphic (BFV, CKKS) schemes for encrypted linear controllers, showing that CKKS offers the most favorable precision-to-overhead ratio for real-valued control signals.
A complementary tutorial by Schl\"{u}ter et al.~\cite{schluter2024code} supplies reference implementations.
These contributions focus on a single encrypted controller block; the observer, state propagator, and multi-agent coordination layer are outside their scope.

\paragraph{Encrypted optimization and distributed control}
For constrained problems, Alexandru et al.~\cite{alexandru2018cloud} proposed encrypted model predictive control (MPC) by running a fixed number of projected gradient iterations on CKKS-encrypted data, accepting suboptimality from the fixed iteration count in exchange for FHE compatibility.
More recently, Binfet et al.~\cite{binfet2023towards} studied encrypted distributed optimization via ADMM, demonstrating that the alternating-direction structure maps naturally onto CKKS arithmetic.
Both approaches handle the control computation stage but assume that the coordinator already holds encrypted state estimates; the estimation and propagation stages are not addressed. 

\paragraph{Encrypted state estimation}
Farokhi et al.~\cite{farokhi2016secure} analyzed encrypted Kalman filtering under the Paillier scheme, which supports only addition on ciphertexts.
Because the Kalman prediction step requires matrix-vector multiplication (which is not a native Paillier operation), their approach is restricted to the measurement update. Under CKKS, the full predict-update cycle becomes tractable because multiplication is available. We exploit this by running a fully encrypted steady-state Kalman filter.

%We exploit this by running a steady-state Kalman filter (with a precomputed gain that avoids online matrix inversion) entirely on CKKS-encrypted vectors. The information filter formulation, which replaces the covariance update with an additive information update, offers a natural alternative when sensor data arrives from multiple agents, since additive updates are the cheapest CKKS operation.

\paragraph{FHE-compatible numerical methods}
The most direct precursor to our work is Kholod et al.~\cite{schlottkelakemper2024securecompute}, who demonstrated that CKKS can support full PDE time-stepping by implementing first-order upwind and second-order Lax-Wendroff schemes for the scalar advection equation.
Their key algorithmic contribution is a \emph{circshift} construction that implements neighbor access in an encrypted vector via two cyclic rotations and a pair of plaintext masks.
We observe that the graph Laplacian of a ring communication topology has exactly the same stencil structure as the one-dimensional advection operator, so the circshift primitive transfers directly to the consensus setting.
% For state propagation, the choice of matrix exponential algorithm under FHE is nontrivial.
% %(relevant to non-linear extensions where the sytem matrix must be exponentiated online).
% Moler and Van Loan~\cite{moler2003nineteen} catalog nineteen methods; of these, the Taylor series with scaling and squaring (their Methods~1 and~3) is the most FHE-compatible because it uses only additions and multiplications, with the Paterson-Stockmeyer evaluation scheme~\cite{paterson1973number} minimizing multiplicative depth.
% The irony is worth noting: the method dismissed as setting ``a clear lower bound on possible performance'' in standard floating-point arithmetic becomes the method of choice under FHE, where matrix inversion (required by Pad\'{e} approximants, the standard alternative) is prohibitively expensive. 

\paragraph*{Gap addressed by this work}
Each of the above threads addresses one stage of an encrypted control
loop in isolation.  This paper closes the loop: we connect sensing,
estimation, propagation, and consensus into a single CKKS pipeline and
provide the first end-to-end error analysis that tracks encryption
noise through the entire cascade.  Two contributions are, to our
knowledge, new.  First, we apply the diagonal
method~\cite{halevi2014algorithms} to graph Laplacians under FHE,
showing that sparsity in the cyclic-diagonal basis (not circulancy)
determines the cost of encrypted multi-agent coordination; this
accommodates non-circulant topologies such as the torus within the same
framework.  Second, we model periodic CKKS bootstrapping as a
discrete-time impulsive disturbance and derive a closed-form
steady-state error ball, providing a direct design equation linking
CKKS parameters, communication topology, and tracking accuracy.
% !TEX root = ../main.tex
\section{Problem Setup}
\label{fhe:sec:setup}

\subsection{CKKS Arithmetic Model}
\label{fhe:sec:ckks_model}

We treat the CKKS fully homomorphic encryption scheme purely as an
\emph{arithmetic constraint}: a small set of primitive operations that any
algorithm must be expressed in terms of.  No knowledge of the underlying
lattice cryptography is required; we refer the reader
to~\cite{cheon2017homomorphic} for the construction.

A CKKS ciphertext $\ct$ encodes a vector of approximate real numbers.
The scheme provides exactly three primitive operations on ciphertexts:
(i)~element-wise addition, producing $\ct(\bm{a}+\bm{b})$;
(ii)~element-wise (Hadamard) multiplication, producing
$\ct(\bm{a}\odot\bm{b})$; and
(iii)~cyclic rotation by an integer shift $s$, producing
$\ct(\mathrm{rot}_s(\bm{a}))$ with
$[\mathrm{rot}_s(\bm{a})]_i = a_{(i+s)\!\!\mod n}$.
Each operation also admits a \emph{plaintext} variant, where a ciphertext
is combined with an unencrypted operand at lower cost and with less noise.
In particular, comparisons, branches, divisions, and transcendental
functions are \emph{not} natively available.  This is the central design
constraint of the present work.

Each operation introduces a small additive error.  Multiplication is the
most expensive: every ciphertext-ciphertext multiply consumes one
\emph{multiplicative level}, with a finite level budget $\mulDepth$
determined by the encryption parameters.
\emph{Bootstrapping} refreshes the ciphertext to a higher level,
restoring the ability to perform further multiplications, but introduces
an error impulse of magnitude $\deltaboot \approx 10^{-6}$, several
orders of magnitude larger than a single arithmetic operation
(${\sim}10^{-14}$ for a ciphertext-plaintext multiply).  The
\emph{bootstrapping period}~$\Tboot$ (how many control cycles elapse
between refreshes) is a key design parameter.

\paragraph*{Cost asymmetry}
Ciphertext-plaintext operations are far cheaper than
ciphertext-ciphertext operations, both in time (${\sim}25\times$) and in
noise (${\sim}60\times$).  This asymmetry motivates a design principle
that pervades the pipeline: \emph{precompute everything possible in
plaintext}.  Gains, system matrices, graph weights, and reference
trajectories are all known offline and are never encrypted; only the
agents' states, measurements, and control inputs traverse the pipeline as
ciphertexts.

\subsection{Multi-Agent System Model}
\label{fhe:sec:agent_model}

Consider $M$ agents, each modeled as a discrete-time linear
time-invariant (LTI) system:
\begin{equation}
  x_i(k+1) = A\, x_i(k) + B\, u_i(k), \qquad
  y_i(k)   = C\, x_i(k) + v_i(k),
  \label{fhe:eq:agent_dynamics}
\end{equation}
where $x_i(k) \in \R^{n}$ is the state, $u_i(k) \in \R^{m}$ the control
input, $y_i(k) \in \R^{p}$ the measurement, and $v_i(k)$ is bounded
sensor noise with $\norm{v_i(k)} \leq \bar{v}$ for all $i, k$.
The matrices $(A, B, C)$ are identical across agents and known in
plaintext.

\begin{assumption}[Stabilizability and detectability]
\label{fhe:assum:stabdet}
The pair $(A, B)$ is stabilizable and the pair $(A, C)$ is detectable.
\end{assumption}

\Cref{fhe:assum:stabdet} guarantees the existence of a stabilizing
feedback gain $K$ (used in the control law below) and a steady-state
Kalman gain $\Kss$ (used in the encrypted observer of
\cref{fhe:subsec:estimation}). 

% \maria{An assumption regarding controllability \& observability should be added. Is the sensor noise considered deterministic or stochastic? The relation of $n,d$ should be given.}
% \sandeep{Would this clarify it?}

\begin{figure}
    \centering
    \includegraphics[width=0.9\linewidth]{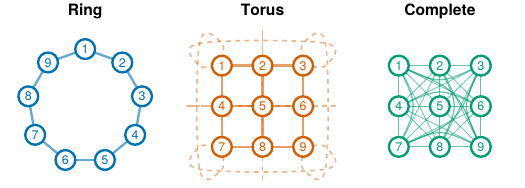}
    \caption{Communication topologies analyzed ($M\!=\!9$ agents).}
    \label{fig:topologies}
    \vspace{-0.5cm}
\end{figure}

The agents communicate over a graph $\mathcal{G} = (\mathcal{V}, \mathcal{E})$ with $|\mathcal{V}| = M$.  The ring ($M$~agents on a cycle) is the primary
topology studied here; extensions to torus and complete-graph topologies
are analyzed in \cref{fhe:rem:circulant}.
% \sandeep{but can probably be re-framed better as the result being applicable to any topology that is sparse in its circulant matrix basis?}
The property exploited under
FHE is that the block Laplacian $L \otimes I_n$ can be decomposed into a
small number of \emph{cyclic diagonals}, each applied via a single CKKS
rotation (\cref{fhe:subsec:consensus}).  The number of nonzero cyclic
diagonals determines the computational cost of the consensus stage.

\subsection{Consensus and Formation Control Law}
\label{fhe:subsec:control_law}

Each agent applies a standard consensus-plus-tracking control
law~\cite{olfati2004consensus}, combining local
state feedback with a consensus coupling term and a formation reference:
\begin{equation}
  u_i(k) = K \,\xhat_i(k)
           \;+\; \epsilon \sum_{j \in \mathcal{N}_i}
                 \bigl(\xhat_j(k) - \xhat_i(k)\bigr)
           \;+\; K_r \bigl(r_i - \xhat_i(k)\bigr),
  \label{fhe:eq:control_law}
\end{equation}
where $\xhat_i(k)$ is the state estimate of agent $i$, $K$ is a
stabilizing feedback gain (designed offline via LQR or pole placement),
$\epsilon > 0$ is the consensus coupling strength, $\mathcal{N}_i$
denotes the neighbors of agent $i$ in $\mathcal{G}$, $r_i \in \R^{n}$
is the formation reference for agent $i$, and $K_r$ is a reference
tracking gain.

Using the stacked estimate vector
$\xhat(k) = [\xhat_1(k)^\top, \ldots, \xhat_M(k)^\top]^\top \in
\R^{Mn}$, the control law becomes
\begin{equation*}
  u(k) = (I_M \otimes K)\,\xhat(k)
         \;-\; \epsilon\,(L \otimes I_n)\,\xhat(k)
         \;+\; (I_M \otimes K_r)\bigl(r - \xhat(k)\bigr),
  \label{fhe:eq:stacked_control}
\end{equation*}
where $\otimes$ denotes the Kronecker product and $r = [r_1^\top, \ldots,
r_M^\top]^\top$.

\subsection{The Encrypted Coordination Problem}
\label{fhe:subsec:encrypted_problem}

The agents do not trust the cloud coordinator with their plaintext states.
Each agent encrypts its measurement $y_i(k)$ using CKKS and transmits
$\ct(y_i(k))$ to the coordinator.  The coordinator must execute the
observer, state propagator, and consensus controller entirely on encrypted
data, producing $\ct(u_i(k))$ for each agent.  Each agent then decrypts
its control input locally and actuates.

The design problem has three requirements:
\begin{enumerate}
  \item \textbf{FHE compatibility.}  Every operation on the coordinator
    must decompose into CKKS additions, multiplications, and rotations.
    No comparisons, divisions, or branches are permitted.
  \item \textbf{Closed-loop stability.}  The encrypted pipeline must
    preserve the stability of the plaintext closed-loop system.
  \item \textbf{Quantifiable accuracy.}  The steady-state tracking error
    attributable to encryption noise must be bounded by a computable
    function of the CKKS parameters and the system's spectral radius.
\end{enumerate}

\cref{fhe:sec:pipeline} addresses requirement~1 by constructing the
pipeline.  \cref{fhe:sec:error} addresses requirements~2 and~3 by
deriving the end-to-end error bound.

% !TEX root = ../main.tex
\section{The Encrypted Control Pipeline}
\label{fhe:sec:pipeline}

We now show how to realize the control law~\eqref{fhe:eq:control_law}
entirely within the CKKS arithmetic model of
\cref{fhe:sec:ckks_model}.  The pipeline has five stages, executed once per control cycle
on the cloud coordinator;~\cref{fhe:fig:pipeline_overview}
provides an overview.
All matrices and gains are
precomputed offline in plaintext; only the agents' states and measurements
are encrypted.  This ensures that the dominant arithmetic is
ciphertext-plaintext, exploiting the cost asymmetry described in
\cref{fhe:sec:ckks_model}.

\subsection{Stage 1: Encrypted Sensing}
\label{fhe:subsec:sensing}

Each agent $i$ samples its measurement $y_i(k) = C\,x_i(k) + v_i(k)$,
encodes it as a CKKS plaintext vector, encrypts it, and transmits
$\ct(y_i(k))$ to the coordinator.  No homomorphic computation occurs at
this stage; the cost is one $\Enc(\cdot)$ operation per agent.

The $M$ measurement vectors are packed into a single ciphertext of
dimension $Mp$:
\begin{equation}
  \ct(\bm{y}(k)) = \ct\bigl([y_1(k)^\top,\; y_2(k)^\top,\; \ldots,\; y_M(k)^\top]^\top\bigr).
  \label{fhe:eq:packed_measurement}
\end{equation}
Packing all agents into one ciphertext enables the coordinator to apply
collective operations (the Laplacian, stacked matrix-vector products)
without inter-ciphertext communication.

\subsection{Stage 2: Encrypted State Estimation}
\label{fhe:subsec:estimation}

The coordinator maintains an encrypted state estimate
$\ct(\xhat(k))$ and updates it using a steady-state Kalman filter.  The
standard Kalman filter requires online solution of a Riccati equation,
which involves matrix inversion, a non-CKKS operation.  We avoid this by
precomputing the steady-state gain $\Kss$ offline (in plaintext) by
solving the discrete algebraic Riccati equation for the pair $(A, C)$
with the known noise covariances.  The online update then reduces to
\begin{equation*}
  \ct(\xhat(k\!+\!1)) = \underbrace{(A - \Kss C)}_{\text{plaintext}}
    \ct(\xhat(k))
    + \underbrace{B}_{\text{pt}} \ct(u(k))
    +\underbrace{\Kss}_{\text{pt}} \ct(y(k))
  \label{fhe:eq:observer_update}
\end{equation*}
Every term is a plaintext-matrix times encrypted-vector product.  The
stacked observer matrix $(I_M \otimes (A - \Kss C))$ is block-diagonal
(one $n \times n$ block per agent), so the effective matvec
dimension reduces from $Mn$ to $n$.

\paragraph*{FHE decomposition}
A plaintext-ciphertext matrix-vector product $M \ct(\bm{v})$ of dimension
$n$ is computed using the \emph{diagonal method}~\cite{halevi2014algorithms}: represent $M$ by its $n$
diagonals $\{d_0, d_1, \ldots, d_{n-1}\}$, then
\begin{equation}
  M\,\ct(\bm{v}) = \sum_{j=0}^{n-1} d_j \odot \mathrm{rot}_j\bigl(\ct(\bm{v})\bigr),
  \label{fhe:eq:diagonal_matvec}
\end{equation}
where $d_j$ is the $j$-th diagonal of $M$ (wrapped cyclically) applied as
a plaintext Hadamard mask.  This uses $n$ rotations, $n$
plaintext-ciphertext multiplications, and $n-1$ additions.  Since all
three are CKKS primitives, the matvec is FHE-compatible.  The
multiplicative depth is one level (for the Hadamard products), regardless
of $n$.

The observer update~\eqref{fhe:eq:observer_update} requires two such
matvec products (one for $A_{\mathrm{obs}} = A - \Kss C$ applied to
$\ct(\xhat(k))$, and one for $\Kss$ applied to $\ct(y(k))$; the
exogenous input term $B\,\ct(u(k))$ vanishes in the consensus
formulation since the control is applied directly to the state).
Because the two products operate on independent inputs at the same
ciphertext level, they consume a single multiplicative level in total.

\subsection{Stage 3: Encrypted State Propagation}
\label{fhe:subsec:propagation}

Because the discrete-time system matrix $A$ is known in plaintext
(\cref{fhe:sec:agent_model}), state propagation is simply
\begin{equation}
  \ct(\xhat(k\!+\!1)) \mathrel{+}= {A}\;\ct(\xhat(k)),
  \label{fhe:eq:propagation}
\end{equation}
which is a single plaintext-ciphertext matvec, costing
one multiplicative level.  This is the cheapest pipeline stage.
% [REMOVED: Taylor-with-scaling-and-squaring paragraph.
%  Discussed online matrix-exponential computation for nonlinear
%  extensions where A depends on encrypted quantities.  Out of scope
%  for the LTI setting of this paper; moved to future work in
%  Section VII.]

\subsection{Stage 4: Encrypted Consensus Control}
\label{fhe:subsec:consensus}

The consensus coupling law in~\cref{fhe:eq:stacked_control} requires
applying the graph Laplacian $L \otimes I_n$ to the encrypted estimate
vector.  For the ring topology, $L$ is circulant, and the Laplacian
application reduces to a \emph{circshift} construction adapted from
Kholod et al.~\cite{schlottkelakemper2024securecompute}:
\begin{equation}
  (L \otimes I_n)\,\ct(\xhat)
  = 2\,\ct(\xhat)
    \;-\; \mathrm{rot}_{+n}\bigl(\ct(\xhat)\bigr)
    \;-\; \mathrm{rot}_{-n}\bigl(\ct(\xhat)\bigr),
  \label{fhe:eq:laplacian_circshift}
\end{equation}
where $\mathrm{rot}_{+n}$ and $\mathrm{rot}_{-n}$ are cyclic rotations by
$\pm n$ positions (one full agent block).  This uses two rotations, one
scalar-ciphertext multiplication (by~2), and two subtractions, consuming
zero multiplicative levels (since all operations are additions or
plaintext multiplications).

Note that the native CKKS rotation operates on the full ciphertext
(all slots), not on the logical sub-vector of length $Mn$.  To obtain a
correct cyclic shift over the sub-vector, the \texttt{circshift}
construction of Kholod et al.~\cite{schlottkelakemper2024securecompute}
performs two rotations in opposite directions and masks each with a
plaintext indicator vector to zero out entries that would otherwise wrap
from the unused portion of the ciphertext.  The cost of each
\texttt{circshift} call is therefore two rotations, two
plaintext-ciphertext multiplications, and one addition.

\begin{remark}[Generalization via the diagonal method]
\label{fhe:rem:circulant}
The circshift construction~\eqref{fhe:eq:laplacian_circshift} is a
special case of the \emph{diagonal method}~\cite{halevi2014algorithms} for plaintext-ciphertext
matrix-vector products.  Any $Mn \times Mn$ matrix $M$ can be
decomposed into at most $Mn$ \emph{cyclic-diagonal} matrices, one per
diagonal offset $k$:
\begin{equation}
  M\,\ct(\bm{v})
  = \sum_{k=0}^{Mn-1} d_k \odot \mathrm{rot}_k\bigl(\ct(\bm{v})\bigr),
  \label{fhe:eq:diagonal_method}
\end{equation}
where $d_k$ is a plaintext mask containing the entries of $M$ along
cyclic diagonal~$k$, and zero diagonals are skipped.  Since each term
is a rotation followed by a plaintext-ciphertext Hadamard product, the
total cost is one rotation per nonzero off-diagonal plus one
multiplicative level (for the Hadamard products).
\end{remark}

For the ring Laplacian, the block matrix $L \otimes I_n$ has exactly
three nonzero cyclic diagonals (the main diagonal and the two
neighbor offsets at $\pm n$), and the masks $d_{\pm n}$ are constant
vectors ($-1$ everywhere).  This reduces to two circshift calls at zero
depth, recovering~\eqref{fhe:eq:laplacian_circshift}.

For a $\sqrt{M} \times \sqrt{M}$ torus with periodic boundaries, the
block Laplacian decomposes as
$L_{\mathrm{torus}} \otimes I_n = (L_h + L_v) \otimes I_n$.  The
vertical component $L_v$ is circulant (stride $\sqrt{M}\,n$,
two rotations).  The horizontal component $L_h$ wraps within each row
independently and is not globally circulant: the masks along its
off-diagonals contain zeros that block inter-row connections.  The
diagonal method handles this naturally, because the masks $d_k$ need
not be constant.  A $3 \times 3$ torus produces seven nonzero cyclic
diagonals (six rotations), compared with three for the ring.

For a complete graph on $M$ agents, every agent-level off-diagonal is
nonzero, yielding $M$ nonzero cyclic diagonals ($M - 1$ rotations).
The block structure ($-I_n$ off-diagonal blocks) keeps the count at $M$
rather than the $Mn$ diagonals of a fully dense matrix.

The cost of the consensus stage is therefore proportional to the number
of nonzero cyclic diagonals of $L \otimes I_n$, not the matrix dimension
$Mn$.  This provides a unified framework: circulant graphs (rings,
$k$-nearest-neighbor rings) achieve the minimum diagonal count;
structured non-circulant graphs (torus, $k$-regular expanders) remain
efficient via sparsity in the cyclic-diagonal basis; and the complete
graph represents the upper bound among block-identity Laplacians.  We
validate this cost model experimentally in \cref{fhe:subsec:per_stage}.

The full encrypted control computation is then:
\begin{align}
  \ct(u(k)) &= (I_M \otimes K)\;\ct(\xhat(k)) \;-\; \epsilon\,(L \otimes I_n)\;\ct(\xhat(k)) \notag \\
             &\quad +\; (I_M \otimes K_r)\bigl(r - \ct(\xhat(k))\bigr).
  \label{fhe:eq:encrypted_control}
\end{align}
In the general case, the local gain $K$ and reference gain $K_r$ would
require block-diagonal matvec products (one level each).  When these
gains are scalar multiples of identity (a common design choice for
homogeneous fleets), the full control computation reduces to
scalar-ciphertext multiplications plus the Laplacian,
consuming zero or one levels depending on whether the Laplacian
uses circshift (ring) or a matvec (torus, complete).

\subsection{Stage 5: Decryption and Actuation}
\label{fhe:subsec:actuation}

The coordinator returns $\ct(u(k))$ to the agents.  Each agent $i$
extracts its block $\ct(u_i(k))$ (via a plaintext mask and rotation),
decrypts with its private key, and actuates.  The coordinator never
observes any plaintext quantity.

\subsection{Bootstrapping Schedule}
\label{fhe:subsec:bootstrapping}

The total multiplicative depth consumed per control cycle is the sum
across stages:
\begin{equation}
  d_{\text{cycle}}
    = \underbrace{1}_{\text{estimation}}
    + \underbrace{1}_{\text{propagation}}
    + \underbrace{0-1}_{\text{consensus}}
    = 2-3 \;\text{levels}.
  \label{fhe:eq:depth_per_cycle}
\end{equation}
where the consensus depth is zero for the ring (scalar-ciphertext
operations only) and one for the torus and complete graph
(plaintext-ciphertext matvec).
With a depth budget $\mulDepth = 15$, the coordinator bootstraps every
$\Tboot = \lfloor \mulDepth / d_{\text{cycle}} \rfloor = 5$--$7$
cycles, applying $\Boot(\cdot)$ to the encrypted state estimate and
injecting an additive noise impulse $\deltaboot$.
The error analysis in \cref{fhe:sec:error} uses the conservative value
$\Tboot = 5$; in practice, bootstrapping is triggered adaptively
when the ciphertext level is exhausted.
% !TEX root = ../main.tex
\section{End-to-End Error Analysis}
\label{fhe:sec:error}

Every CKKS operation injects a small additive perturbation into the
encoded plaintext.  We now show that, for a stable closed-loop system,
these perturbations remain bounded and the encrypted trajectory stays
within a quantifiable neighborhood of the true (plaintext) trajectory.

\subsection{Error Model and Assumptions}
\label{fhe:subsec:error_model}

Let $\xtrue(k)$ denote the plaintext trajectory produced by exact
arithmetic and $\xenc(k)$ the trajectory produced by the encrypted
pipeline.  Each pipeline stage introduces a bounded perturbation:
$\delta_{\mathrm{obs}}(k)$ from the observer matvec products,
$\delta_{\mathrm{prop}}(k)$ from the propagation matvec,
$\delta_{\mathrm{ctrl}}(k)$ from the consensus computation, and
$\deltaboot(k)$ from bootstrapping, injected every $\Tboot$ cycles.
Because every inter-bootstrapping operation is a plaintext-ciphertext
product, the per-step noise contributions
$\norm{\delta_{\mathrm{obs}}},\, \norm{\delta_{\mathrm{prop}}},\,
\norm{\delta_{\mathrm{ctrl}}}$ remain at machine-precision level
(${\sim}10^{-14}$), while bootstrapping injects noise of magnitude
$\norm{\deltaboot} \approx 10^{-6}$.
Bootstrapping noise therefore dominates by eight orders of magnitude~\cite{schlottkelakemper2024securecompute}.

\begin{assumption}[Stability]
\label{fhe:assum:stability}
The closed-loop system matrix
$\Acl = A + B(K - \epsilon L \otimes I_n + K_r)$ is Schur stable:
$\rhocl := \sprad(\Acl) < 1$.  Let $\Kss$ be chosen such that
$A_{\mathrm{obs}} = A - \Kss C$ is Schur stable (existence guaranteed
by \cref{fhe:assum:stabdet}):
\end{assumption}

\begin{assumption}[Plaintext operands and bounded noise]
\label{fhe:assum:noise}
All system matrices $(A, B, C)$ and control gains
$(K, K_r, \epsilon, \Kss)$ are applied as plaintext operands.
Since every homomorphic multiplication in the pipeline is
plaintext$\times$ciphertext, the per-step CKKS arithmetic noise
satisfies $\norm{w(k)} \leq \delta_{\mathrm{arith}}$ independently of
the encrypted state magnitude~$\norm{e(k)}$.
(For ciphertext$\times$ciphertext products, the noise would scale with
the state; the pipeline avoids such products by design.)
\end{assumption}

\subsection{Separation of Controller and Observer Errors}
\label{fhe:subsec:separation}

The encrypted pipeline operates on both the plant state $x(k)$ and the
observer state $\xhat(k)$.  The tracking error therefore lives in a
$2Mn$-dimensional augmented space.  The following lemma shows that the
separation principle reduces the analysis to a single
$Mn$-dimensional recursion.

\begin{lemma}[Separation of error dynamics]
\label{fhe:lem:separation}
Define the observer tracking error
$\hat{e}(k) = \xhat_{\mathrm{enc}}(k) - \xhat_{\mathrm{true}}(k)$
and the estimation mismatch
$\tilde{\varepsilon}(k) = [x_{\mathrm{enc}}(k) - x_{\mathrm{true}}(k)]
  - [\xhat_{\mathrm{enc}}(k) - \xhat_{\mathrm{true}}(k)]$.
Under the encrypted pipeline of \cref{fhe:sec:pipeline}, the augmented
error $[\hat{e}(k);\; \tilde{\varepsilon}(k)]$ evolves as
\begin{equation}
  \begin{bmatrix} \hat{e} \\ \tilde{\varepsilon} \end{bmatrix}\!(k\!+\!1)
  =
  \underbrace{%
  \begin{bmatrix}
    \Acl & \Kss C \\[2pt]
    0    & A - \Kss C
  \end{bmatrix}}_{\displaystyle A_{\mathrm{aug}}}
  \begin{bmatrix} \hat{e} \\ \tilde{\varepsilon} \end{bmatrix}\!(k)
  + \begin{bmatrix} w_{\hat{e}}(k) \\[2pt] w_{\tilde{\varepsilon}}(k)
  \end{bmatrix},
  \label{fhe:eq:augmented}
\end{equation}
where $w_{\hat{e}}$ and $w_{\tilde{\varepsilon}}$ collect the CKKS
noise in each coordinate.  Because $A_{\mathrm{aug}}$ is block
lower-triangular, its eigenvalues are
$\sigma(\Acl) \cup \sigma(A - \Kss C)$, and the mismatch
$\tilde{\varepsilon}(k)$ evolves autonomously with the stable observer
matrix.
\end{lemma}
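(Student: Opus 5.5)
We will derive the augmented recursion directly by writing the plaintext and encrypted closed loops side by side and subtracting.

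\textbf{Step 1: the two closed loops.} For the plaintext loop we use the stacked plant $x(k+1) = (I_M\otimes A)x(k) + (I_M\otimes B)u(k)$ and the control law $u(k) = G\,\xhat(k) + (I_M\otimes K_r) r$, where $G := I_M\otimes K - \epsilon (L\otimes I_n) - I_M\otimes K_r$. We write the observer in predictor form, $\xhat(k+1) = A\xhat(k) + Bu(k) + \Kss\bigl(y(k) - C\xhat(k)\bigr)$ with $y = Cx + v$; this is the form the pipeline realizes when the estimation and propagation stages are composed. Here $A, B, C, \Kss$ are understood block-diagonally as $I_M\otimes(\cdot)$.

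The encrypted loop obeys the same equations, but each homomorphic stage adds a perturbation. The observer and propagation updates of $\xhat_{\mathrm{enc}}$ receive $\delta_{\mathrm{obs}}+\delta_{\mathrm{prop}}$, plus $\deltaboot$ at bootstrapping instants. The control computed from $\xhat_{\mathrm{enc}}$ receives $\delta_{\mathrm{ctrl}}$, which also enters the plant through $B$. The plant itself is physical and uncorrupted apart from this control perturbation. The measurement noise $v(k)$ and the reference $r$ are common to both loops.

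\textbf{Step 2: subtract.} We set $e = x_{\mathrm{enc}} - x_{\mathrm{true}}$. The exogenous terms $r$ and $v$ cancel, and $BG = \Acl - A$ by \cref{fhe:assum:stability}. This gives
\[
e(k+1) = A e(k) + BG\,\hat e(k) + B\delta_{\mathrm{ctrl}}(k),
\]
\[
\hat e(k+1) = (A + BG)\hat e(k) + \Kss C\bigl(e(k) - \hat e(k)\bigr) + w_1(k).
\]
Substituting $\tilde\varepsilon = e - \hat e$ turns the second equation into $\hat e(k+1) = \Acl \hat e(k) + \Kss C\,\tilde\varepsilon(k) + w_{\hat e}(k)$, with $w_{\hat e} := w_1 + B\delta_{\mathrm{ctrl}}$. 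Subtracting the two equations, the $BG\,\hat e$ terms cancel, leaving $\tilde\varepsilon(k+1) = (A - \Kss C)\tilde\varepsilon(k) + w_{\tilde\varepsilon}(k)$ with $w_{\tilde\varepsilon} := B\delta_{\mathrm{ctrl}} - w_{\hat e}$. Stacking the two recursions gives \eqref{fhe:eq:augmented}.

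\textbf{Step 3: spectrum.} Since $A_{\mathrm{aug}}$ is block upper-triangular, its determinant factors as
\[
\det(\lambda I - A_{\mathrm{aug}}) = \det(\lambda I - \Acl)\,\det(\lambda I - A + \Kss C),
\]
so $\sigma(A_{\mathrm{aug}}) = \sigma(\Acl)\cup\sigma(A-\Kss C)$. The second row involves only $\tilde\varepsilon$, so the mismatch is decoupled from $\hat e$ and evolves under the Schur-stable observer matrix. It is autonomous except for its own noise input.

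\textbf{Main obstacle.} The difficulty is bookkeeping rather than ideas. The pipeline as written composes estimation and propagation as separate updates, and it drops the $B\,\ct(u)$ term. It must be reconciled with a single predictor-form observer before the algebra closes, and it must be fixed in which equation $\delta_{\mathrm{ctrl}}$ and $\deltaboot$ appear.

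We will resolve this by defining the "true" observer as exactly the map the pipeline implements in exact arithmetic. Any implementation-specific constant matrices are then absorbed into $\Acl$ and $A-\Kss C$ consistently, and every discrepancy is pushed into $w$. With this convention, the block structure depends only on the linearity of the loop and on the noise being additive and state-independent (\cref{fhe:assum:noise}).

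We also note that the matrix is upper-triangular rather than lower-triangular as stated. The eigenvalue claim holds either way.
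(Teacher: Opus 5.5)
Your proof is correct and is essentially the paper's argument made explicit. You subtract the two loops, let the control difference $G\hat e$ produce the $\Acl$ block and the measurement difference $\Kss C e$ produce the coupling, then change coordinates to $\tilde\varepsilon = e-\hat e$. You are right that the result is block upper-triangular. One small fix: your $G$ carries $-K_r$, which is the sign the control law dictates, but the paper writes $\Acl$ with $+K_r$, so say that you take $\Acl := A+BG$ rather than citing the assumption.

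Do revise your closing claim. The triangular structure does not follow from linearity and additive noise alone. It follows from the observer being driven by the same input $B\,u(k)$ as the plant. If that term is dropped, as the literal Stage~2 does, a $BG\,\hat e$ term survives in the $\tilde\varepsilon$ recursion and the triangular form is lost. This predictor-form observer is the assumption that you and the paper both rely on implicitly.
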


\begin{proof}
Subtract the plaintext observer update from its encrypted counterpart. Each agent decrypts its control input:
$u_{\mathrm{enc}}(k) = \Dec(\ct(u(k)))$.  Because all gains are applied
as plaintext operands, the control input difference
$u_{\mathrm{enc}}(k) - u_{\mathrm{true}}(k)
= (K - \epsilon L\!\otimes\! I_n + K_r)\,\hat{e}(k)$
feeds back through~$B$, producing the $\Acl$ block.
The observer correction difference
$\Kss [y_{\mathrm{enc}}(k) - y_{\mathrm{true}}(k)]
= \Kss C\, [x_{\mathrm{enc}}(k) - x_{\mathrm{true}}(k)]$
introduces the off-diagonal coupling.  The coordinate change
$\tilde{\varepsilon} = (x_{\mathrm{enc}} - x_{\mathrm{true}})
  - (\xhat_{\mathrm{enc}} - \xhat_{\mathrm{true}})$
eliminates the coupling from the $(2,1)$ block, yielding the block
lower-triangular structure.  Both diagonal blocks are Schur stable by
\cref{fhe:assum:stability}.
\end{proof}

\Cref{fhe:lem:separation} implies that
$\tilde{\varepsilon}(k)$ evolves autonomously under $A_{\mathrm{obs}}$.
However, $\tilde{\varepsilon}$ is also kicked by a
$-\deltaboot$ impulse at each bootstrapping event: the cloud bootstraps
$\ct(\xhat)$, shifting $\hat{e}$ by $+\deltaboot$, while the physical
plant state does not jump, so
$\tilde{\varepsilon}(k_b^+) = \tilde{\varepsilon}(k_b^-) - \deltaboot$.
The steady-state ball for $\tilde{\varepsilon}$ is therefore
$\deltaboot / (1 - \rho_{\mathrm{obs}}^{\Tboot})$, the same order as
$\deltaboot$ itself.  The coupling through $\Kss C$ adds a persistent
disturbance to $\hat{e}(k)$ that we absorb into an effective noise
bound
$\bar{\delta}_{\mathrm{eff}}
  = \bar{\delta}_{\mathrm{boot}}
    \bigl(1 + \norm{\Kss C}
      / (1 - \rho_{\mathrm{obs}}^{\Tboot})\bigr)$.
The observer error then evolves as
\begin{equation}
  \hat{e}(k+1) = \Acl\,\hat{e}(k) + w(k),
  \label{fhe:eq:reduced_error}
\end{equation}

where $w(k)$ absorbs the CKKS arithmetic noise, the periodic bootstrapping impulse, and the bounded coupling from $\tilde{\varepsilon}$ via $\Kss C$. For readability, we write $e(k) := \hat{e}(k)$ in what follows.
The aggregate noise satisfies
\begin{equation}
    \norm{w(k)} \leq \delta_{\mathrm{arith}} + \bar{\delta}_{\mathrm{eff}}\,
           \mathbf{1}_{k \equiv 0 \;\mathrm{mod}\; \Tboot}.
  \label{fhe:eq:aggregate_noise}
\end{equation}

\subsection{Periodic Bootstrapping Bound}
\label{fhe:subsec:periodic_bound}

The per-cycle arithmetic noise is bounded by $\delta_{\mathrm{arith}}
\leq d_{\text{cycle}} \cdot \delta_{\ct \times \pt} \approx 3 \times 3
\times 10^{-15} = 9 \times 10^{-15}$.  The bootstrapping noise, injected
once every $\Tboot = 5$ cycles, satisfies
$\norm{\deltaboot} \approx 1.2 \times 10^{-6}$, dominating
arithmetic noise by eight orders of magnitude.  This motivates an
analysis that exploits the periodic noise structure rather than applying
a worst-case bound at every step.

\begin{thm}[Encrypted trajectory bound]
\label{fhe:thm:periodic}
Let Assumptions~\ref{fhe:assum:stability}
and~\ref{fhe:assum:noise} hold, and let bootstrapping occur every
$\Tboot$ cycles with effective per-event
noise $\bar{\delta}_{\mathrm{eff}}$ (including observer coupling).  Then:
\begin{enumerate}
  \item[\textup{(i)}]
    \textbf{Steady-state error ball.}\;
    The tracking error sampled at bootstrapping instants satisfies,
    for all $m \geq 0$,
    \begin{equation}
      \norm{e(m\Tboot)} \;\leq\; \rhocl^{m\Tboot}\,\norm{e(0)}
        \;+ \epsss,
%\; \frac{\bar{\delta}_{\mathrm{eff}}}{1 - \rhocl^{\Tboot}}.
      \label{fhe:eq:periodic_bound}
    \end{equation}
where $\epsss$ is the steady-state radius defined as:
%    The steady-state radius is
    \begin{equation}
      \epsss \;=\; \frac{\bar{\delta}_{\mathrm{eff}}}{1 - \rhocl^{\Tboot}}.
      \label{fhe:eq:eps_ss_boot}
    \end{equation}

  \item[\textup{(ii)}]
    \textbf{Inter-bootstrapping contraction.}\;
    Between consecutive bootstrapping events, the error decays:
    \begin{equation}
      \norm{e(m\Tboot + j)} \;\leq\; \rhocl^{\,j}\,\norm{e(m\Tboot)}
        \;+\; \frac{\delta_{\mathrm{arith}}}{1 - \rhocl},
      \label{fhe:eq:contraction}
    \end{equation}
    for $j = 0, 1, \ldots, \Tboot - 1$.
\end{enumerate}
\end{thm}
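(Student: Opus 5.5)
The plan is to work entirely with the reduced recursion \eqref{fhe:eq:reduced_error}, $e(k+1)=\Acl e(k)+w(k)$, together with the noise bound \eqref{fhe:eq:aggregate_noise}, and to unroll it by variation of constants. Both parts use the same elementary estimate for powers of $\Acl$, so I first fix that estimate. Since $\rhocl<1$ is only a spectral radius, I will either assume a norm in which $\norm{\Acl^j}\le\rhocl^j$ (true when $\Acl$ is normal, or after an induced-norm change of coordinates), or use $\norm{\Acl^j}\le c\,\rhocl^j$ and carry the constant $c$. The statement is written with $c=1$, so I will take the former route and note where the constant would enter.

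For part (ii), fix $m$ and $j\in\{0,\dots,\Tboot-1\}$. The steps are:
\begin{enumerate}
\item Unroll from $k=m\Tboot$:
\[
e(m\Tboot+j)=\Acl^{j}e(m\Tboot)+\sum_{i=0}^{j-1}\Acl^{j-1-i}w(m\Tboot+i).
\]
\item Handle the term at $i=0$, where the bootstrapping impulse enters. I read $w(m\Tboot)$ as already folded into $e(m\Tboot)$, i.e.\ the index convention in \eqref{fhe:eq:aggregate_noise} places the impulse at the sampling instant. Then every noise term in the window satisfies $\norm{w}\le\delta_{\mathrm{arith}}$.
\item Apply the triangle inequality and sum the geometric series:
\[
\sum_{i<j}\rhocl^{i}\delta_{\mathrm{arith}}\le\frac{\delta_{\mathrm{arith}}}{1-\rhocl},
\]
which gives \eqref{fhe:eq:contraction}.
\end{enumerate}

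For part (i), I move to the sampled sequence $E_m:=e(m\Tboot)$. Unrolling over one period gives
\[
E_{m+1}=\Acl^{\Tboot}E_m+W_m,\qquad W_m=\sum_{i=0}^{\Tboot-1}\Acl^{\Tboot-1-i}w(m\Tboot+i).
\]
By \eqref{fhe:eq:aggregate_noise}, exactly one term in $W_m$ carries the impulse $\bar\delta_{\mathrm{eff}}$, and the rest are arithmetic. Following the paper's convention, I absorb the arithmetic contribution $\delta_{\mathrm{arith}}/(1-\rhocl)$, which is about $10^{-14}$, into $\bar\delta_{\mathrm{eff}}$, since it is eight orders smaller. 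With that, $\norm{W_m}\le\bar\delta_{\mathrm{eff}}$, using $\norm{\Acl^{\cdot}}\le1$ on the impulse term. Iterating gives
\[
\norm{E_m}\le\rhocl^{m\Tboot}\norm{E_0}+\bar\delta_{\mathrm{eff}}\sum_{\ell=0}^{m-1}\rhocl^{\ell\Tboot},
\]
and bounding the partial geometric sum by $1/(1-\rhocl^{\Tboot})$ yields \eqref{fhe:eq:periodic_bound} with $\epsss$ as in \eqref{fhe:eq:eps_ss_boot}.

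The main obstacle is the norm issue. If $\Acl$ is non-normal, $\norm{\Acl^j}$ can exceed $\rhocl^j$. I would first try to resolve this with a weighted norm $\norm{x}_P$, where $P$ solves a Lyapunov inequality $\Acl^\top P\Acl\preceq\rho^2P$ for some $\rho\in(\rhocl,1)$. This gives an exact contraction, at the price of replacing $\rhocl$ by $\rho$ arbitrarily close to it, or introducing a condition-number factor $\kappa(P)^{1/2}$ when converting back to the Euclidean norm. I would state the theorem in that norm, or with that constant, which is the honest form of the bound. A secondary issue is that the sampled-bound argument depends on the placement of the impulse within the period. Fixing the bootstrapping index as the first step of each window makes the two parts consistent.
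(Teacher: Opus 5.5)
Your proposal follows the paper's proof. Both unroll $e(k+1)=\Acl e(k)+w(k)$ in a weighted norm $\|\cdot\|_P$ and sum the arithmetic noise as a geometric series for part~(ii). Both then get part~(i) from the sampled affine contraction $\|e((m+1)\Tboot)\|_P\le\rhocl^{\Tboot}\|e(m\Tboot)\|_P+\bar{\delta}_{\mathrm{eff}}$ after neglecting (you: absorbing) the $O(\delta_{\mathrm{arith}})$ term, and the paper uses the same implicit convention that the impulse is counted in $e(m\Tboot)$.

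Your caveat about the norm is justified. The paper takes $P$ from $\Acl^\top P\Acl-P=-I$ and asserts $\|\Acl^m\|_P\le\rhocl^m$, which need not hold when $\Acl$ is non-normal. Your version, with $\Acl^\top P\Acl\preceq\rho^2P$ for some $\rho\in(\rhocl,1)$ or an explicit constant such as $\kappa(P)^{1/2}$ when returning to the Euclidean norm, is the rigorous form of the same argument.
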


\begin{proof}
We work in the weighted norm $\norm{x}_P = \sqrt{x^\top\! P\, x}$
where $P \succ 0$ solves $\Acl^\top P\, \Acl - P = -I$, which exists
for any Schur-stable $\Acl$.  In this norm,
$\norm{\Acl^m}_P \leq \rhocl^m$ for all $m \geq 0$.

\emph{Part~(ii).}\;
Between bootstrapping events $m\Tboot$ and $(m\!+\!1)\Tboot$, no
bootstrapping occurs, so $\norm{w(k)} \leq \delta_{\mathrm{arith}}$.
Iterating~\eqref{fhe:eq:reduced_error} over $j$ steps from $m\Tboot$
and taking norms:
$\norm{e(m\Tboot + j)}_P \leq \rhocl^j \norm{e(m\Tboot)}_P
  + \delta_{\mathrm{arith}} \sum_{i=0}^{j-1} \rhocl^i
  \leq \rhocl^j \norm{e(m\Tboot)}_P
  + \delta_{\mathrm{arith}} / (1 - \rhocl)$.

\emph{Part~(i).}\;
At the bootstrapping instant $(m\!+\!1)\Tboot$, the impulse adds
$\bar{\delta}_{\mathrm{eff}}$ to the contracted error.  Applying
part~(ii) with $j = \Tboot$ and dropping the
$O(\delta_{\mathrm{arith}})$ term (eight orders of magnitude smaller):
\begin{equation*}
  \norm{e((m\!+\!1)\Tboot)}_P \;\leq\;
    \rhocl^{\Tboot}\,\norm{e(m\Tboot)}_P
    \;+\; \bar{\delta}_{\mathrm{eff}}.
\end{equation*}
This is an affine contraction in $m$.  Iterating and summing the
resulting geometric series yields~\eqref{fhe:eq:periodic_bound}.
\end{proof}

The steady-state error depends on three designer-controlled quantities:
the bootstrapping precision $\bar{\delta}_{\mathrm{boot}}$ (set by the
CKKS ring dimension), the observer coupling
$\norm{\Kss C}$ (set by the observer design), and the closed-loop spectral radius $\rhocl$ (set
by the control design).  The pipeline structure enters only through the
bootstrapping period
$\Tboot = \lfloor \mulDepth / d_{\mathrm{cycle}} \rfloor$.

Part~(ii) formalizes the ``self-healing'' phenomenon observed by
Kholod et~al.~\cite{schlottkelakemper2024securecompute}: the error
decreases monotonically between bootstrapping events because the stable
closed-loop dynamics contract perturbations faster than the (negligible)
arithmetic noise can accumulate.  The inter-bootstrapping contraction
factor $\rhocl^{\Tboot}$ provides a practical guideline for choosing
$\Tboot$: effective contraction requires
$\Tboot \gg 1/\lvert\!\log\rhocl\rvert$, so that the bootstrapping
impulse is largely absorbed before the next refresh.

\begin{remark}[Sensitivity to spectral radius]
Differentiating~\eqref{fhe:eq:eps_ss_boot} with respect to $\rhocl$
gives
$\partial \epsss / \partial \rhocl
  = \bar{\delta}_{\mathrm{eff}}\,\Tboot\,
    \rhocl^{\Tboot - 1} / (1 - \rhocl^{\Tboot})^{2}$,
which diverges as $\rhocl \to 1$.  For the present scenario
($\rhocl = 0.95$, $\Tboot = 5$), a $1\%$ increase in $\rhocl$
increases $\epsss$ by approximately $16\%$, underscoring the importance
of robust pole placement in encrypted control design.
\end{remark}

\begin{remark}[Privacy-accuracy-computation tradeoff]
The steady-state error $\epsss$ is a decreasing function of the CKKS
ring dimension $\ringDim$, since a larger ring provides higher
bootstrapping precision (smaller $\deltaboot$).  However, every CKKS
primitive's computation time also scales with $\ringDim$, increasing the
minimum feasible control period.  The ring dimension therefore
parameterizes a Pareto frontier between tracking accuracy and
computational cost, with the closed-loop spectral radius $\rhocl$
acting as a lever: a faster-decaying system (smaller $\rhocl$) tolerates
coarser CKKS parameters for the same $\epsss$.
\end{remark}

\begin{remark}[Connection to impulsive systems]
The periodic bootstrapping every $\Tboot$ cycles creates a discrete-time
impulsive system: between bootstrapping events, the error contracts by a
factor $\rhocl^{\Tboot}$; at each event, an additive impulse of
magnitude $\deltaboot$ is injected.  Stability requires the
inter-impulse contraction to dominate the impulse, which is
automatically satisfied when $\rhocl < 1$.  This is a dwell-time
condition in the sense of switched systems theory: the bootstrapping
period $\Tboot$ must be long enough for the system to contract between
noise injections.
\end{remark}

% !TEX root = ../main.tex
\section{Numerical Demonstration}
\label{fhe:sec:experiments}

We validate the encrypted control pipeline on three communication topologies in a formation control scenario implemented in Julia using OpenFHE.jl~\cite{schlottkelakemper2024openfhejulia} and
SecureArithmetic.jl~\cite{schlottkelakemper2024securecompute} for CKKS
encryption. The implementation can be found here\footnote{\url{https://github.com/sdamera95/EncryptedControl.jl}}. All three topologies use $M = 9$ double-integrator agents
($Mn = 36$): a ring, a $3 \times 3$ torus, and the complete graph
$K_9$. Each is run through the full five-stage pipeline (sensing,
estimation, propagation, consensus, actuation).
All comparison plots show plaintext trajectories (solid)
against encrypted trajectories (dotted).  In most stages the two overlap,
confirming negligible CKKS overhead; the encryption gap is isolated
directly in \cref{fhe:fig:encryption_gap}.

\subsection{Scenario and Parameters}
\label{fhe:subsec:scenario}

We instantiate~\eqref{fhe:eq:agent_dynamics} with
$M = 9$ double-integrator agents in 2D.
Each agent has state $x_i = [p_{x},
p_{y}, v_{x}, v_{y}]^\top \in \R^{4}$ (position and velocity in two
axes), so the per-agent dimension is $n = 4$ and the collective state
dimension is $Mn = 36$.  The discrete-time system matrices for a
sampling period $\Delta t$ are
\begin{equation}
  A = \begin{bmatrix} I_2 & \Delta t\, I_2 \\ 0 & I_2 \end{bmatrix}, \quad
  B = \begin{bmatrix} \tfrac{1}{2}\Delta t^{2}\, I_2 \\ \Delta t\, I_2 \end{bmatrix}, \quad
  C = \begin{bmatrix} I_2 & 0 \end{bmatrix}.
  \label{fhe:eq:double_integrator}
\end{equation}

% \asim{in encrypted\_estimation.jl it is noted that Block-diagonal collective system for N identical agents. C = I (full state).}
%
Three topologies are compared at $M = 9$: ring,
$3 \times 3$ torus (periodic boundaries), and complete graph $K_9$.
The torus and complete-graph Laplacians are applied via the diagonal
method (\cref{fhe:rem:circulant}).  In all cases, the 36-dimensional
collective state is packed into a single ciphertext.
CKKS parameters use a ring dimension $2^{12}$, multiplicative depth budget $\mulDepth = 15$, with bootstrapping enabled.  The local gain $K$ and reference gain $K_r$ are scalar
multiples of identity, so the control computation reduces to
scalar-ciphertext multiplications plus the Laplacian, which costs zero depth for the ring and one level
for the torus and complete graph (\cref{fhe:rem:circulant}).
The closed-loop spectral radius for the ring ($\epsilon = 0.3$;
torus $0.3$; complete $0.1$) is is
$\rhocl = \sprad(\Acl) = 0.95$, computed offline from
$\Acl = A + B(K - \epsilon L \otimes I_n + K_r)$.  With $\Tboot = 5$,
the inter-bootstrapping contraction factor is
$\rhocl^{\Tboot} = 0.95^{5} \approx 0.77$, and the steady-state error
bound (\cref{fhe:thm:periodic}) predicts
$\epsss = \bar{\delta}_{\mathrm{eff}} / (1 - 0.77) \approx
4.3\,\bar{\delta}_{\mathrm{eff}}$. 
% \asim{perhaps we could clarify that in the analysis, we use a periodic bootstrapping abstraction with $T_{\mathrm{boot}}$ to obtain a closed-form error bound, while in theory bootstrapping is triggered adaptively}

\subsection{Topology and Per-Stage Validation}
\label{fhe:subsec:per_stage}

\paragraph{Laplacian cost model}
\cref{fhe:rem:circulant} predicts that the encrypted Laplacian cost
scales linearly with the number of nonzero cyclic diagonals of the block
Laplacian $L \otimes I_n$.  \cref{fhe:tab:cost_summary} confirms this:
the ring ($M\!=\!9$, three diagonals) costs $945$\,ms, the torus (seven
diagonals) costs $2452$\,ms ($2.6\times$, closely matching the
diagonal-count ratio $7/3 \approx 2.3$), and the complete graph (nine
diagonals) costs $3479$\,ms.  All three use only plaintext-ciphertext
products, consuming a single multiplicative level (zero for the ring,
whose constant masks reduce to scalar multiplications).

\paragraph{Consensus}
The ring Laplacian applied via circshift drives the disagreement
$\norm{(L \otimes I_n)\,x(k)}$ to machine precision (${\sim}10^{-13}$)
within 60 steps in plaintext.  Under encryption, the disagreement
decays for approximately 30 steps and then flattens at a noise floor of
${\sim}10^{-3}$, set by accumulated CKKS bootstrapping error.  The zero-depth property of the
circshift Laplacian is confirmed empirically: consensus steps never
trigger bootstrapping, so the noise floor is determined entirely by
bootstrapping events inserted for other stages.

\begin{figure}[t]
\centering
\includegraphics[width=\columnwidth]{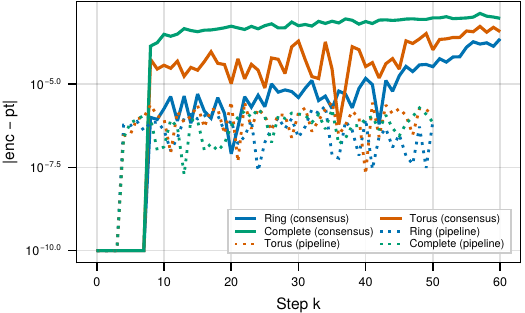}
\caption{CKKS-induced gap $|\text{enc} - \text{pt}|$
  % \asim{(CKKS-induced discrepancy between encrypted and plaintext executions)}
  for consensus (solid) and closed-loop pipeline (dotted) on various topologies.}
  \vspace{-0.5cm}
\label{fhe:fig:encryption_gap}
\end{figure}

\paragraph{Torus Laplacian decomposition}
To validate the generalization of \cref{fhe:rem:circulant}, we
benchmark the torus Laplacian on $M = 9$ agents arranged on a
$3 \times 3$ grid with periodic boundaries (the smallest
non-degenerate 2D torus).  The per-agent state dimension remains
$n = 4$, giving a 36-dimensional packed ciphertext.
The block Laplacian decomposes as
$L_{\mathrm{torus}} \otimes I_n = (L_h + L_v) \otimes I_n$.  The
vertical component $L_v$ is circulant in the row-major packing (stride
$\sqrt{M}\,n = 12$, two rotations).  The horizontal component $L_h$
wraps within each row independently and is not globally circulant; its
cyclic-diagonal masks contain zeros that block inter-row connections.
The diagonal method handles both components uniformly: the $36 \times 36$
block Laplacian has seven nonzero cyclic diagonals (six rotations),
compared with three for the nine-agent ring.  The measured costs
(\cref{fhe:tab:cost_summary}) confirm linear scaling in the
diagonal count.

\begin{figure}[t]
\centering
\includegraphics[width=\columnwidth]{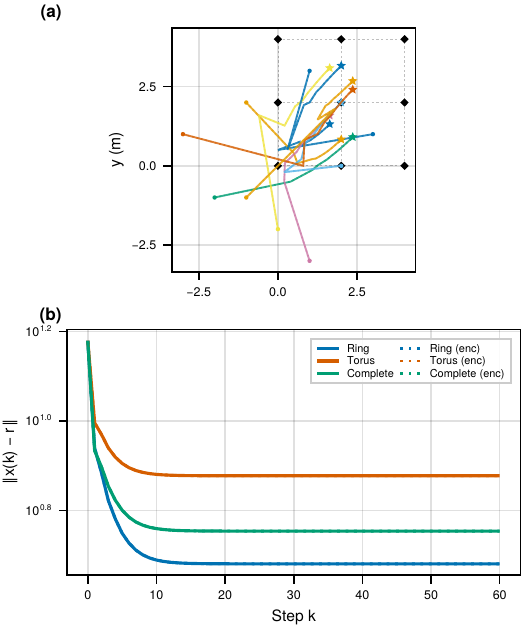}
\caption{Formation control validation.
  (a)~Nine agents (ring topology) converge from scattered positions
  toward the $3\!\times\!3$ grid formation (diamonds).  Solid: plaintext;
  dotted: encrypted.  Circles mark initial positions; stars mark final.
  (b)~Formation tracking error $\norm{x - r}$ on various topologies.
  % \textcolor{blue}{Plaintext and encrypted curves
  % overlap within each topology, confirming negligible CKKS overhead in
  % formation control.  The complete graph converges fastest, consistent
  % with its spectral advantage.}
  }
\label{fhe:fig:formation}
\end{figure}

% \begin{figure}[t]
% \centering
% \includegraphics[width=\columnwidth]{figures/consensus_topology_comparison.pdf}
% \caption{Consensus disagreement $\norm{(L \otimes I_n) x}$
%   (log scale): ring vs.\ torus vs.\ complete graph ($M\!=\!9$).  Solid:
%   plaintext; dotted: CKKS.}
% \label{fhe:fig:torus_results}
% \end{figure}

\paragraph{Formation control}
\cref{fhe:fig:formation} shows agent trajectories converging to the
target $3\!\times\!3$ grid formation from scattered initial
positions.  The encrypted trajectories are visually indistinguishable
from their plaintext counterparts; the CKKS overhead is orders of
magnitude below the formation tracking error itself
(cf.\ \cref{fhe:fig:encryption_gap}).
The formation reference vector $r$ is
itself encrypted, so the coordinator never learns the target shape.
\cref{fhe:fig:formation}(b) compares the
formation tracking error on all three topologies.

\paragraph{State estimation}
\delete{The Luenberger observer with precomputed steady-state Kalman gain $\Kss$
(spectral radius of $A - \Kss C$ verified to be less than 1) converges
within ${\sim}20$ steps in plaintext, reaching a floor of ${\sim}0.05$
set by measurement noise.  The encrypted observer error is indistinguishable from
plaintext (the two curves overlap for the entire run), confirming that
the separation principle (\cref{fhe:lem:separation}) holds: the CKKS
noise (${\sim}10^{-6}$ per step) is negligible relative to the
measurement noise floor.}\sandeep{The Luenberger observer with precomputed steady-state Kalman gain $\Kss$
($\rho(A - \Kss C) = 0.899$) estimates velocities from position-only
measurements ($C = [I_2\;\; 0]$), converging within ${\sim}40$ steps
to a floor of ${\sim}0.07$ set by measurement noise.  The encrypted
observer error is indistinguishable from plaintext (gap
${\sim}10^{-6}$), confirming that the separation principle
(\cref{fhe:lem:separation}) holds: fresh encrypted measurements at
each step suppress CKKS noise accumulation in the observer.}

\subsection{End-to-End Pipeline}
\label{fhe:subsec:pipeline_results}

\cref{fhe:fig:encryption_gap} shows the headline result: 50 timesteps of
closed-loop pipeline operation (dotted) alongside 60 steps
of open-loop consensus (solid) on all three topologies.
The pipeline encryption gap (dotted curves) remains flat at
${\sim}10^{-6}$ for all topologies, confirming that closed-loop feedback
suppresses CKKS noise accumulation.  By contrast, the open-loop consensus
gap (solid curves) grows to ${\sim}10^{-4}$--$10^{-3}$ as bootstrapping
noise accumulates without corrective measurements.  No topology diverges,
confirming closed-loop stability under CKKS encryption and validating
\cref{fhe:thm:periodic} on three distinct communication graphs.

Each pipeline timestep consumes 2--3 multiplicative levels (topology-dependent; see
\cref{fhe:eq:depth_per_cycle}), with bootstrapping every $\Tboot = 5$
cycles.  The per-step computation time is approximately $5.5$\,s (dominated by the block-diagonal observer matvec products in the estimation stage, plus one in the
propagation stage, with bootstrapping amortized over 5 cycles).  The consensus stage contributes
945\,ms for the ring, 2452\,ms for the torus, and
3479\,ms for the complete graph; see \cref{fhe:tab:cost_summary}),
confirming the cost advantage of sparse Laplacian application via the
diagonal method.

\subsection{Cost Profile}
\label{fhe:subsec:cost_profile}

\cref{fhe:tab:cost_summary} summarizes the per-operation and per-stage
costs. 
% \asim{Since the numerical values are synthetic, the reported CKKS timings should be interpreted primarily as reflecting the homomorphic workload.}
The key observation is that the block-diagonal observer matrices ($A_{\mathrm{obs}}$,
$\Kss$) have only 3 nonzero cyclic diagonals each, making the observer
matvec (${\sim}1{,}050$\,ms) comparable in cost to the ring Laplacian
(945\,ms). A dense $36 \times 36$ matvec would cost 14{,}110\,ms, so the
block-diagonal sparsity yields a $14\times$ speedup 
% \asim{is this isolated kernel timing or end-to-end?}
, confirming that sparse Laplacian structure provides a substantial
computational advantage under encryption.  Bootstrapping at
$3431$\,ms per event is amortized over 5 cycles.

\begin{table}[t]
\centering
\caption{Computational cost summary ($M\!=\!9$, $Mn\!=\!36$).  Timings are medians over five trials.}
\label{fhe:tab:cost_summary}
\begin{tabular}{@{}lrrr@{}}
\toprule
Operation & Diags & Time (ms) & Depth \\
\midrule
Ring Laplacian ($M\!=\!9$, $Mn\!=\!36$) & 3 & 945 & 0 \\
Torus Laplacian ($3\!\times\!3$, $Mn\!=\!36$) & 7 & 2{,}452 & 1 \\
Complete Laplacian ($M\!=\!9$, $Mn\!=\!36$) & 9 & 3{,}479 & 1 \\
$\ct \times \pt$ (observer, block-diag.)  & 3 & 1{,}043  & 1 \\
$\ct \times \pt$ (propagation, block-diag.)  & 2 & 549  & 1 \\
$\Boot(\cdot)$               & & 3{,}431   & -- \\
\midrule
Observer step ($A_{\mathrm{obs}} + \Kss$) & & 1{,}989 & 1 \\
Pipeline step (est.\ + ctrl.\ + prop.) & & 5{,}474 & 2--3 \\
\bottomrule
\end{tabular}
\end{table}
The full cycle time of $5.5$\,s yields a maximum
update rate of $\sim0.18$\,Hz.  Production-grade security parameters
(larger ring dimension for 128-bit security) increase all wall-clock
times by a constant factor without affecting the pipeline structure or
error bounds. These rates are viable for the slow-dynamics applications motivating
this work.

% \subsection{Discussion}
% \label{fhe:subsec:discussion}

% Every comparison plot exhibits the same qualitative pattern: the
% plaintext curve converges to a noise floor set by the physical
% measurement process, while the encrypted curve converges to a higher
% floor set by CKKS arithmetic.  The gap between the two is the cost of
% privacy.

% The empirical encrypted error floor (${\sim}0.5$--$1.0$) is the
% experimental realization of the steady-state error ball $\epsss$ from
% \cref{fhe:thm:main_bound}.  The ISS bound predicts
% $\epsss = \delta_{\max} / (1 - \rhocl)$.  Under the development
% configuration, the bootstrapping precision ($\deltaboot \approx 1.2
% \times 10^{-6}$) measured in isolation
% (\cref{fhe:sec:benchmarking}) is smaller than the observed error floor,
% indicating that cumulative noise over multiple operations and
% bootstrapping events grows faster than the single-event bound.  A
% 128-bit security configuration with larger ring dimension would improve
% the bootstrapping precision and tighten this gap.  The structural
% prediction of the ISS framework is nevertheless validated: the encrypted
% error is bounded and the closed-loop system does not diverge.

% !TEX root = ../main.tex
\section{Conclusion}
\label{fhe:sec:conclusion}

We presented an end-to-end encrypted control pipeline for multi-agent
formation control in which every stage of the control loop operates on
CKKS-encrypted data using only addition, multiplication, and cyclic
rotation.  The diagonal method provides a unified, topology-agnostic
primitive for encrypted Laplacian application whose cost scales with
the number of nonzero cyclic diagonals, accommodating ring, torus, and
complete-graph topologies within the same framework.  A periodic
bootstrapping analysis yields the closed-form steady-state bound
$\epsss = \bar{\delta}_{\mathrm{eff}} / (1 - \rhocl^{\Tboot})$,
giving practitioners a direct design equation linking CKKS parameters
to tracking accuracy.  Validation on three topologies ($M\!=\!9$
agents) confirms stable encrypted operation with bounded error.
Natural extensions include nonlinear dynamics via polynomial
approximation (with Taylor-with-scaling-and-squaring
and Paterson-Stockmeyer evaluation),
encrypted MPC via fixed-iteration ADMM,
multi-key CKKS for true multi-party coordination,
and encrypted feedback gains via ciphertext-ciphertext products with
Lie-Trotter operator splitting.

%%%%%%%%%%%%%%%%%%%%%%%%%%%%%%%%%%%%%%%%%%%%%%%%%%%%%%%%%%%%%%%%%%%%%%%%%%%%%%%%
% \section{Acknowledgments}

% Acknowledgments

% \nocite{*}
\small{
\bibliographystyle{ieeetr}
\bibliography{references}

@misc{darpa_dprive,
  title        = {DPRIVE: Data Protection in Virtual Environments},
  author       = {{DARPA}},
}

@misc{schlottkelakemper2024securecompute,
  title={{S}ecure numerical computations using fully homomorphic encryption},
  author={Michael Schlottke-Lakemper and Arseniy Kholod},
  year={2024},
  note={JuliaCon 2024, Eindhoven, 10th July 2024},
  doi={10.5281/zenodo.12703229}
}

@misc{schlottkelakemper2024openfhejulia,
  title={{O}pen{FHE}.jl: {F}ully homomorphic encryption in {J}ulia using {O}pen{FHE}},
  author={Schlottke-Lakemper, Michael},
  year={2024},
  howpublished={\url{https://github.com/hpsc-lab/OpenFHE.jl}},
  doi={10.5281/zenodo.10460452}
}

@inproceedings{cheon2017homomorphic,
	title        = {Homomorphic encryption for arithmetic of approximate numbers},
	author       = {Cheon, Jung Hee and Kim, Andrey and Kim, Miran and Song, Yongsoo},
	booktitle    = {International conference on the theory and application of cryptology and information security},
	pages        = {409--437},
	year         = {2017},
	organization = {Springer}
}

@inproceedings{halevi2014algorithms,
  title={Algorithms in helib},
  author={Halevi, Shai and Shoup, Victor},
  booktitle={Annual Cryptology Conference},
  pages={554--571},
  year={2014},
  organization={Springer}
}

@inproceedings{kogiso2015cyber,
	title        = {Cyber-security enhancement of networked control systems using homomorphic encryption},
	author       = {Kogiso, Kiminao and Fujita, Takahiro},
	booktitle    = {54th IEEE Conference on Decision and Control (CDC)},
	pages        = {6836--6843},
	year         = {2015},
	organization = {IEEE}
}

@inproceedings{schluter2024code,
  title={A code-driven tutorial on encrypted control: From pioneering realizations to modern implementations},
  author={Schl{\"u}ter, Nils and others},
  booktitle={2024 European Control Conference (ECC)},
  pages={914--920},
  year={2024},
  organization={IEEE}
}

@article{kim2016encrypting,
	title     = {Encrypting controller using fully homomorphic encryption for security of cyber-physical systems},
	author    = {Kim, Junsoo and others},
	journal   = {IFAC-PapersOnLine},
	volume    = {49},
	number    = {22},
	pages     = {175--180},
	year      = {2016},
	publisher = {Elsevier}
}

@inproceedings{alexandru2018cloud,
	title        = {Cloud-based MPC with encrypted data},
	author       = {Alexandru, Andreea B and Morari, Manfred and Pappas, George J},
	booktitle    = {2018 IEEE conference on decision and control (CDC)},
	pages        = {5014--5019},
	year         = {2018},
	organization = {IEEE}
}

@article{binfet2023towards,
	title     = {Towards privacy-preserving cooperative control via encrypted distributed optimization},
	author    = {Binfet, Philipp and others},
	journal   = {at-Automatisierungstechnik},
	volume    = {71},
	number    = {9},
	pages     = {736--747},
	year      = {2023},
	publisher = {De Gruyter}
}

@article{farokhi2016secure,
	title     = {Secure and private cloud-based control using semi-homomorphic encryption},
	author    = {Farokhi, Farhad and Shames, Iman and Batterham, Nathan},
	journal   = {IFAC-PapersOnLine},
	volume    = {49},
	number    = {22},
	pages     = {163--168},
	year      = {2016},
	publisher = {Elsevier}
}

@article{olfati2004consensus,
  title={Consensus problems in networks of agents with switching topology and time-delays},
  author={Olfati-Saber, Reza and Murray, Richard M},
  journal={IEEE Transactions on automatic control},
  volume={49},
  number={9},
  pages={1520--1533},
  year={2004},
  publisher={IEEE}
}
}

%%%%%%%%%%%%%%%%%%%%%%%%%%%%%%%%%%%%%%%%%%%%%%%%%%%%%%%%%%%%%%%%%%%%%%%%%%%%%%%%
% \appendix
% \input{sections/appendix}

%%%%%%%%%%%%%%%%%%%%%%%%%%%%%%%%%%%%%%%%%%%%%%%%%%%%%%%%%%%%%%%%%%%%%%%%%%%%%%%%

\end{document}